\documentclass[conference]{IEEEtran}
\usepackage{enumerate}
\usepackage{multirow,array}
\usepackage{cite}
\usepackage{graphicx}
\usepackage{psfrag}
\usepackage{caption}
\usepackage{subcaption}
\usepackage{url}
\usepackage{amsmath}
\usepackage{amsthm}
\usepackage{array}
\usepackage{amssymb}
\usepackage{amsfonts}
\usepackage{float}
\usepackage{tabu}
\usepackage{algorithm}
\usepackage{algorithmicx}
\usepackage{algcompatible}
\usepackage{algpseudocode}
\usepackage{color, colortbl}
\definecolor{Gray}{gray}{0.9}

\usepackage[font=small]{caption}
\usepackage{subcaption}
\usepackage{tikz}
\tikzset{
    cross/.pic = {
    \draw[rotate = 45] (-#1,0) -- (#1,0);
    \draw[rotate = 45] (0,-#1) -- (0, #1);
    }
}
\usetikzlibrary{patterns}
\usetikzlibrary{decorations.pathreplacing,calligraphy}

\allowdisplaybreaks

\newcommand{\floor}[1]{\left\lfloor #1 \right\rfloor}

\makeatletter
\renewcommand{\boxed}[1]{\text{\fboxsep=.2em\fbox{\m@th$\displaystyle#1$}}}
\makeatother
\newcommand{\bea}{\begin{eqnarray}}
\newcommand{\eea}{\end{eqnarray}}
\newcommand{\bean}{\begin{eqnarray*}}
\newcommand{\eean}{\end{eqnarray*}}

\newcommand{\B}{\boldsymbol}

\newcounter{casenum}


\newtheorem*{conjecture*}{Conjecture}
\newtheorem{lemma}{Lemma}

\newtheorem{definition}{Definition}
\newtheorem{theorem}{Theorem}

\newtheorem{constr}{Construction}
\IEEEoverridecommandlockouts
\begin{document}

\title{Rate-Optimal Streaming Codes Over the Three-Node Decode-And-Forward Relay Network}

\author{\IEEEauthorblockN{\textbf{Shubhransh~Singhvi}\IEEEauthorrefmark{1}, 
                     \textbf{Gayathri R.}\IEEEauthorrefmark{2} 
                     and \textbf{P. Vijay Kumar}\IEEEauthorrefmark{2}}
                     
    \IEEEauthorblockA{\IEEEauthorrefmark{1}%
                      {Signal Processing \&  Communications Research  Center, International Institute  of  Information Technology, Hyderabad}}
    
    \IEEEauthorblockA{\IEEEauthorrefmark{2}%
                     Electrical Communication Engineering, Indian Institute of Science, Bangalore 
                     }
                     
    email: shubhransh.singhvi@students.iiit.ac.in, \{gayathrir, pvk\}@iisc.ac.in
    
\thanks{This research is supported in part by the J C Bose National Fellowship JCB/2017/000017 and in part by SERB Grant No. CRG/2021/008479.}
 }
 \date{\today}
\maketitle
\thispagestyle{empty}	
\pagestyle{empty}

\begin{abstract}
We study the three-node Decode-and-Forward ($\mathsf{D\&F}$) relay network subject to random and burst packet erasures. The source wishes to transmit an infinite stream of packets to the destination via the relay. The three-node $\mathsf{D\&F}$ relay network is constrained by a decoding delay of $T$ packets, i.e., the packet transmitted by the source at time $i$ must be decoded by the destination by time $i+T$. For the individual channels from source to relay and relay to destination, we assume a delay-constrained sliding-window ($\mathsf{DCSW}$) based packet-erasure model that can be viewed as a tractable approximation to the commonly-accepted Gilbert-Elliot channel model. Under the model, any time-window of width $w$ contains either up to $a$ random erasures or else erasure burst of length at most $b~(\geq a)$. Thus the source-relay and relay-destination channels are modeled as $(a_1,b_1,w_1,T_1)$ and $(a_2,b_2,w_2,T_2)$ $\mathsf{DCSW}$ channels. We first derive an upper bound on the capacity of the three-node $\mathsf{D\&F}$ relay network. We then show that the upper bound is tight for the parameter regime: $\max\{b_1,b_2\}~|~(T-b_1-b_2-\max\{a_1,a_2\}+1), a_1=a_2 ~\text{OR} ~b_1=b_2$ by constructing streaming codes achieving the bound. The code construction requires field size linear in $T$, and has decoding complexity equivalent to that of decoding an $\mathsf{MDS}$ code. 
\end{abstract}
\section{Introduction}
Low-latency communication is a critical ingredient of upcoming promising applications such as telesurgery, virtual and augmented reality,  industrial automation and self-driving cars \cite{5GAmericas}. Ultra-Reliable, Low-Latency Communication (URLLC) is one of the three core focus areas of 5G, where latency is measured as the time elapsed between the transmission of a packet from the source and it's recovery at the receiver. Using ARQ-based schemes to ensure reliability results in an undesirable round-trip delay, which makes it challenging to meet the low latency requirement of URLLC. Physical layer FEC cannot help recover from packet drops arising due to congestion, a wireless link in deep fade or else late packet arrival. Streaming codes represent a packet-level FEC scheme for countering such packet losses. 


\subsection{A Brief History of Streaming Codes}
Research on streaming codes began with the study of burst-erasure correction under a decoding-delay constraint \cite{MartSunTIT04, MartTrotISIT07}; the authors argued in favor of packet-extension encoding framework, where the redundancy is added within the packets rather than being transmitted as separate packets to avoid adding to network congestion. A measurement study of mobile video calls over wireless networks \cite{YuVCStudyINFOCOM} indicated that packet erasures occur both in an isolated and bursty fashion. In \cite{BadrPatilKhistiTIT17}, a delay-constrained sliding-window ($\mathsf{DCSW}$) channel model was introduced as a tractable deterministic approximation to the commonly-accepted Gilbert-Elliott erasure channel model \cite{gilbert, elliott, HasHoh, VajRamJhaKum, RamVajJhaKum} that is capable of causing burst and random erasures. An $(a,b,w,T)$ $\mathsf{DCSW}$ channel imposes a decoding-delay constraint of $T$, and can cause at most $a$ random erasures or else, a burst of $b$ erasures within any sliding window of size $w$ time slots, where $0<a \leq b \leq T$. 
\begin{figure}[H]
\centering
\resizebox{.475 \textwidth}{!}
{
\begin{tikzpicture}

\fill (-0.2,0.5) circle[radius=1pt];
\fill (-0.4,0.5) circle[radius=1pt];
\fill (-0.6,0.5) circle[radius=1pt];

\draw[thick,<->] (0,1.2) -- (5,1.2);
\draw (2.5,1.4) node{$w = 5$; burst of length $3$};
\draw[thick,<->] (2,2) -- (7,2);
\draw (4.5,2.2) node{$w = 5$; $2$ random erasures};

\draw[step=1cm,black,very thin] (0,0) grid (10,1);

\path (.5,0.5) pic[red, rotate = 0] {cross=10pt};
\path (1.5,0.5) pic[red, rotate = 0] {cross=10pt};
\path (2.5,0.5) pic[red, rotate = 0] {cross=10pt};
\path (6.5,0.5) pic[red, rotate = 0] {cross=10pt};
\path (8.5,0.5) pic[red, rotate = 0] {cross=10pt};

\draw (0.5,-0.2) node{$t$};
\draw (1.5,-0.2) node{$t+1$};
\draw (2.5,-0.2) node{$t+2$};
\draw (3.5,-0.2) node{$t+3$};
\draw (4.5,-0.2) node{$t+4$};
\draw (5.5,-0.2) node{$t+5$};
\draw (6.5,-0.2) node{$t+6$};
\draw (7.5,-0.2) node{$t+7$};
\draw (8.5,-0.2) node{$t+8$};
\draw (9.5,-0.2) node{$t+9$};

\fill (10.2,0.5) circle[radius=1pt];
\fill (10.4,0.5) circle[radius=1pt];
\fill (10.6,0.5) circle[radius=1pt];

\draw (5,-0.75) node{time $\longrightarrow$};
\end{tikzpicture}
}
\caption{Illustrating a permissible erasure pattern in an $(a=2,b=3,w=5,T)~  \mathsf{DCSW}$ channel.} \label{fig:three-node}
\end{figure}
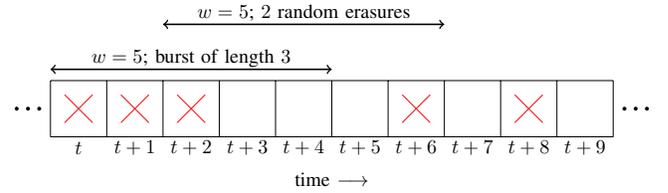

 \vspace{-2ex}
In \cite{BadrPatilKhistiTIT17, NikDeepPVK}, it was shown that one can without loss of generality, set $w=T+1$ and hence, we will abbreviate and write $(a,b,T)$ in place of $(a,b,w,T)$. A packet-level code is referred to as an $(a,b,T)$ streaming code if it enables recovery from all the permissible erasure patterns of an $(a,b,T)$ $\mathsf{DCSW}$ channel. The coding rate, denoted by $\mathsf{R}$, of an $(a,b,T)$ streaming code was shown in \cite{BadrPatilKhistiTIT17} to be upper bounded as 
\bea\label{eq:rbpp}
    \mathsf{R} \leq  \frac{T-a+1}{T-a+1+b} \triangleq \mathsf{C}_{a,b,T}.
\eea
Hence, we will refer to $\mathsf{C}_{a,b,T}$ as the point-to-point channel capacity. Code constructions achieving the point-to-point channel capacity for all parameters $(a,b,T)$ can be found in \cite{NikPVK, KhistiExplicitCode, Small}. The construction in \cite{NikPVK} is not explicit, and requires a field size of $q^2$, where $q \geq T+b-a$ is a prime power. \cite{KhistiExplicitCode} provides an explicit construction with a field size that scales quadratically with the delay. Explicit code construction with reduced quadratic field size, $q^2,$ where $q\geq T$ is presented in \cite{Small}. Streaming codes based on staggered diagonal embedding \cite{NikRamVajKum, RamVajNikPVK, GSDE} having linear field size are rate-optimal for special cases. Streaming codes have also been  constructed for channels  with  unequal  source-channel inter-arrival rates \cite{BadrPatilKhistiTIT17}, multiplicative-matrix channels \cite{robin} and multiplexed communication scenarios with different decoding delays for different streams \cite{KhistiMultiplex, KhistiMultiplex2}. In \cite{RudowRashmi18}, the authors consider a setting for variable-size arrivals. Locally recoverable streaming codes for packet-erasure recovery were constructed in \cite{RamVajKum}.  Other FEC schemes suitable for streaming setting can be found in \cite{AdlCas, LeongHo, LeoQurHo, MalMedYeh, bats, Shokrollahi, IyeSUW, FelZig, DraKhi}. In contrast to the existing literature on burst and random packet erasure correcting streaming codes which focuses on point-to-point networks, our focus in this paper is on three-node relay network, which consists of a source, a relay and a destination. This  kind  of topology is often present in content delivery networks \cite{KacTom, KhistiThreeNode, NikFacDomKhi}. 

 \vspace{-1ex}
\subsection{Paper Outline}
The three-node $\mathsf{D\&F}$ relay network and symbol-wise decode-and-forward $(\mathsf{SW~D\&F})$ strategy are introduced in Section \ref{sec:three-node}.
An upper bound on the capacity of the three-node $\mathsf{D\&F}$ relay network is derived in Section \ref{sec:upp-bnd}. The Staggered diagonal embedding $\left(\mathsf{SDE}\right)$ approach \cite{NikRamVajKum} is introduced in Section \ref{sec:SDE}. A rate-optimal streaming code construction using the $\mathsf{SDE}$ approach and the $\mathsf{SW~D\&F}$ strategy is provided in Section \ref{sec:constr}. Section \ref{sec:conclusion} concludes the paper. 
 
\section{Three-Node Relay Network}
\label{sec:three-node}
We follow \cite{KhistiThreeNode} in assuming a $\mathsf{D\&F}$ network and further, one in which the encoding function at the relay does not take into account the erasure pattern observed over $(s,r)$ channel~\footnote{This assumption is relaxed in the more recent work~\cite{FacKris}.}. Such a relaying strategy would be preferred in settings where the relay also has an interest in  the contents of the packet stream. The network consists of a source, a destination and a relay between them, which are denoted by $s, d$ and $r$, respectively. The channel between nodes $s$ and $r$ is denoted by $(s, r)$, and the channel between nodes $r$ and $d$ is denoted by $(r, d)$. We consider the case where the $(s,r)$ and $(r,d)$ channels are subject to both random and burst erasures. Thus the channels $(s,r)$ and $(r,d)$ are modeled as $(a_{1},b_{1},T_{1})$ and $(a_{2},b_{2},T_{2})$ $\mathsf{DCSW}$ channels, respectively, where $T_{1}, T_{2} < T$. 
 \vspace{-2ex}
 
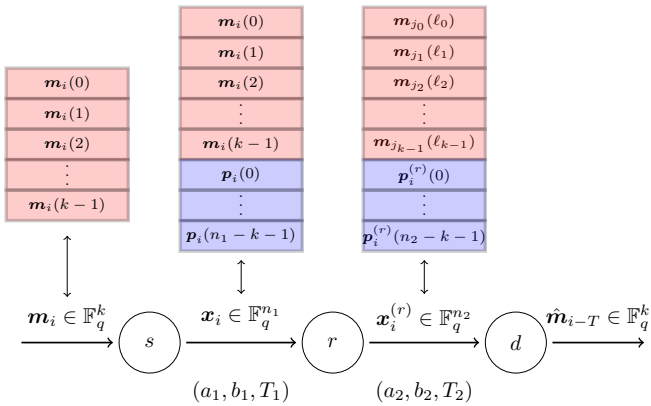
\begin{figure}[H]
\centering
\resizebox{.49 \textwidth}{!}
{
\begin{tikzpicture}

\draw [ultra thick, draw=black, fill=red, opacity=0.2] (-2.35,4) rectangle (-0.35,4.5); 
\draw [ultra thick, draw=black, fill=red, opacity=0.2] (-2.35,3.5) rectangle (-0.35,4); 
\draw [ultra thick, draw=black, fill=red, opacity=0.2] (-2.35,3) rectangle (-0.35,3.5); 
\draw [ultra thick, draw=black, fill=red, opacity=0.2] (-2.35,2.5) rectangle (-0.35,3); 
\draw [ultra thick, draw=black, fill=red, opacity=0.2] (-2.35,2) rectangle (-0.35,2.5); 
\draw (-1.35,4.25) node{\scriptsize{$\B{m}_{i}(0)$}};
\draw (-1.35,3.75) node{\scriptsize{$\B{m}_{i}(1)$}};
\draw (-1.35,3.25) node{\scriptsize{$\B{m}_{i}(2)$}};
\draw (-1.35,2.9) node{\scriptsize{$.$}};
\draw (-1.35,2.75) node{\scriptsize{$.$}};
\draw (-1.35,2.6) node{\scriptsize{$.$}};
\draw (-1.35,2.25) node{\scriptsize{$\B{m}_{i}(k-1)$}};

\draw [ultra thick, draw=black, fill=red, opacity=0.2] (0.5,5) rectangle (2.5,5.5); 
\draw [ultra thick, draw=black, fill=red, opacity=0.2] (0.5,4.5) rectangle (2.5,5); 
\draw [ultra thick, draw=black, fill=red, opacity=0.2] (0.5,4) rectangle (2.5,4.5); 
\draw [ultra thick, draw=black, fill=red, opacity=0.2] (0.5,3.5) rectangle (2.5,4); 
\draw [ultra thick, draw=black, fill=red, opacity=0.2] (0.5,3) rectangle (2.5,3.5); 
\draw [ultra thick, draw=black, fill=blue, opacity=0.2] (0.5,2.5) rectangle (2.5,3); 
\draw [ultra thick, draw=black, fill=blue, opacity=0.2] (0.5,2) rectangle (2.5,2.5); 
\draw [ultra thick, draw=black, fill=blue, opacity=0.2] (0.5,1.5) rectangle (2.5,2); 
\draw (1.5,5.25) node{\scriptsize{$\B{m}_{i}(0)$}};
\draw (1.5,4.75) node{\scriptsize{$\B{m}_{i}(1)$}};
\draw (1.5,4.25) node{\scriptsize{$\B{m}_{i}(2)$}};
\draw (1.5,3.9) node{\scriptsize{$.$}};
\draw (1.5,3.75) node{\scriptsize{$.$}};
\draw (1.5,3.6) node{\scriptsize{$.$}};
\draw (1.5,3.25) node{\scriptsize{$\B{m}_{i}(k-1)$}};
\draw (1.5,2.75) node{\scriptsize{$\B{p}_{i}(0)$}};
\draw (1.5,2.4) node{\scriptsize{$.$}};
\draw (1.5,2.25) node{\scriptsize{$.$}};
\draw (1.5,2.1) node{\scriptsize{$.$}};
\draw (1.5,1.75) node{\scriptsize{$\B{p}_{i}(n_1-k-1)$}};

\draw (4.5,5.25) node{\scriptsize{$\B{m}_{j_0}(\ell_0)$}};
\draw (4.5,4.75) node{\scriptsize{$\B{m}_{j_1}(\ell_1)$}};
\draw (4.5,4.25) node{\scriptsize{$\B{m}_{j_2}(\ell_2)$}};
\draw (4.5,3.9) node{\scriptsize{$.$}};
\draw (4.5,3.75) node{\scriptsize{$.$}};
\draw (4.5,3.6) node{\scriptsize{$.$}};
\draw (4.5,3.25) node{\scriptsize{$\B{m}_{j_{k-1}}(\ell_{k-1})$}};
\draw (4.5,2.75) node{\scriptsize{$\B{p}_{i}^{(r)}(0)$}};
\draw (4.5,2.4) node{\scriptsize{$.$}};
\draw (4.5,2.25) node{\scriptsize{$.$}};
\draw (4.5,2.1) node{\scriptsize{$.$}};
\draw (4.5,1.75) node{\scriptsize{$\B{p}_{i}^{(r)}(n_2-k-1)$}};
\draw [ultra thick, draw=black, fill=red, opacity=0.2] (3.5,5) rectangle (5.5,5.5); 
\draw [ultra thick, draw=black, fill=red, opacity=0.2] (3.5,4.5) rectangle (5.5,5); 
\draw [ultra thick, draw=black, fill=red, opacity=0.2] (3.5,4) rectangle (5.5,4.5); 
\draw [ultra thick, draw=black, fill=red, opacity=0.2] (3.5,3.5) rectangle (5.5,4); 
\draw [ultra thick, draw=black, fill=red, opacity=0.2] (3.5,3) rectangle (5.5,3.5); 
\draw [ultra thick, draw=black, fill=blue, opacity=0.2] (3.5,2.5) rectangle (5.5,3); 
\draw [ultra thick, draw=black, fill=blue, opacity=0.2] (3.5,2) rectangle (5.5,2.5); 
\draw [ultra thick, draw=black, fill=blue, opacity=0.2] (3.5,1.5) rectangle (5.5,2);

\draw[<->] (-1.35,0.75) -- (-1.35,1.75);
\draw[thick,->] (-2.1,0) -- (-0.6,0);
\draw (-1.35,0.4) node{$\B{m}_i \in \mathbb{F}_q^k$};
\draw (0,0) circle (0.5cm);
\draw (0,0) node{$s$};
\draw[thick,->] (0.6,0) -- (2.4,0);
\draw (1.5,0.4) node{$\B{x}_i \in \mathbb{F}_q^{n_{1}}$};
\draw[<->] (1.5,0.8) -- (1.5,1.35);
\draw (1.5,-0.8) node{$(a_1,b_1,T_1)$};
\draw (3,0) circle (0.5cm);
\draw (3,0) node{$r$};
\draw[thick,->] (3.6,0) -- (5.4,0);
\draw (4.5,0.4) node{$\B{x}_i^{(r)} \in \mathbb{F}_q^{n_{2}}$};
\draw[<->] (4.5,0.8) -- (4.5,1.35);
\draw (4.5,-0.8) node{$(a_2,b_2,T_2)$};
\draw (6,0) circle (0.5cm);
\draw (6,0) node{$d$};
\draw[thick,->] (6.6,0) -- (8.1,0);
\draw (7.35,0.4) node{$\hat{\B{m}}_{i-T} \in \mathbb{F}_q^k$};
\end{tikzpicture}
}
\caption{Illustrating a three-node $\mathsf{D\&F}$ relay network under packet-extension framework. Here, $(.)$ denotes the symbol index in a packet. The time instances $j_0,j_1,\ldots,j_{k-1}\in[0,i]$, and the symbol indices $\ell_0,\ell_1,\ldots,\ell_{k-1}\in[0,k-1]$. } \label{fig:three-node}
\end{figure}

\vspace{-1ex}
Node $s$ wishes to transmit an infinite stream of packets $\{\B{m}_i\}_{i=0}^{\infty}$ to node $d$ via the node $r$. At any time $i$, as shown in Fig. \eqref{fig:three-node}, the source encodes the packet $\B{m}_i \in \mathbb{F}_q^k$, into a coded packet $\B{x}_i = \left[\B{m}_i^{\intercal},\B{p}_i^{\intercal}\right]^{\intercal} \in \mathbb{F}_q^{n_{1}}$, where the parity $\B{p}_i\in\mathbb{F}_q^{n_1-k}$ in general depends upon all prior and current message packets $\{\B{m}_j\}_{j=0}^{i}$. The encoded packet is transmitted to the relay via the $(s,r)$ channel, and the relay receives $\B{y}_i^{(r)} \in \mathbb{F}_q^{n_{1}} \cup \{*\}$, where $\B{y}_i^{(r)}$ equals either $\B{x}_i$ or erasure symbol $``*"$. In the same time slot $i$, the relay transmits $\B{x}_i^{(r)} \in \mathbb{F}_q^{n_{2}}$ to the destination through the $(r,d)$ channel. The transmitted packet at the relay is permitted to be a function of all the received packets till time $i$; ${\{\B{y}^{(r)}_j\}_{j=0}^{i}}$ \footnote{We follow \cite{KhistiThreeNode} in adopting this convention.}. The destination receives $\B{y}_i \in \mathbb{F}_q^{n_{2}} \cup \{*\}$ where $\B{y}_i$ equals either $\B{x}_i^{(r)}$ or erasure symbol $``*"$. The decoding-delay constraint of $T$ is construed as requiring that destination must produce an estimate of $\B{m}_i$, denoted by $\hat{\B{m}}_{i}$, upon receiving $\B{y}_{i+T}$. The overall coding rate of the network, denoted by $\mathsf{R}_{(s,r,d)}$, is given by \cite{KhistiThreeNode}:
 \vspace{-1ex}
\bea
    \mathsf{R}_{(s,r,d)} \triangleq \frac{k}{\max\{n_{1},n_{2}\}}.
\eea
 \vspace{-1ex}

A packet-level code over the three-node $\mathsf{D\&F}$ relay network will be referred to as an $(a_{1},b_{1},a_{2},b_{2},T)$ streaming code if it can simultaneously recover under the overall decoding-delay $T$ from all the permissible erasure patterns of the $(s,r)$ and $(r,d)$ channels. It is entirely possible that the packet-level code employs a coding technique that causes different message symbols lying within the same packet to be decoded with different delays. For example, this is the case with the diagonal embedding and staggered diagonal embedding approaches employed in \cite{NikPVK, KhistiExplicitCode, Small, NikRamVajKum, RamVajNikPVK, GSDE}. This suggests that as shown in Fig. \eqref{fig:three-node}, $(r,d)$ encoding scheme may group together message symbols belonging to different $(s,r)$ packets to form the message symbols that are part of the same packet transmitted from the relay to destination. Such an approach was successfully adopted in \cite{KhistiThreeNode} for the case when the $(s,r)$ and $(r,d)$ channels encountered only arbitrary erasures. This approach was termed as Symbol-Wise Decode-and-Forward ($\mathsf{SW~D\&F}$). With this in mind, we introduce the concept of delay profile \cite{KhistiThreeNode}.
 \begin{definition} A delay profile for a $k$-length message packet is defined as 
 \vspace{-2ex}
\bean
    \textbf{d}=\Big((t_{0}, \tau_{0}),(t_{1}, \tau_{1}),\ldots,(t_{k-1}, \tau_{k-1})\Big),
\eean
\vspace{-3ex}

where $ (t_{\ell}, \tau_{\ell}) \in \mathbb{Z}_{+}^{2}$ denotes the decoding delay of $\B{m}_i(\ell)$ at $(s,r)$ and $(r,d)$ channels, respectively, for any time $i$. The first-hop and second-hop delay profiles are defined to be
$(t_{0}, t_{1}, \ldots, t_{k-1})$ and $(\tau_{0}, \tau_{1}, \ldots, \tau_{k-1})$, respectively. 
\end{definition}
\vspace{-3ex}
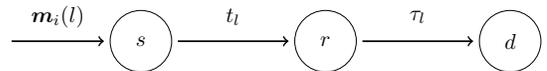
\begin{figure}[H]
\centering
\resizebox{.4 \textwidth}{!}
{
\begin{tikzpicture}
\draw[thick,->] (-2.1,0) -- (-0.6,0);
\draw (-1.35,0.4) node{$\B{m}_{i}(l)$};
\draw (0,0) circle (0.5cm);
\draw (0,0) node{$s$};
\draw[thick,->] (0.6,0) -- (2.4,0);
\draw (1.5,0.4) node{$t_l$};
\draw (3,0) circle (0.5cm);
\draw (3,0) node{$r$};
\draw[thick,->] (3.6,0) -- (5.4,0);
\draw (4.5,0.4) node{$\tau_l$};
\draw (6,0) circle (0.5cm);
\draw (6,0) node{$d$};
\end{tikzpicture}
}
\caption{Illustrating a delay profile for the three-node $\mathsf{D\&F}$ relay network. Note that the decoding delays are independent of $i$.} \label{fig:three-node}
\end{figure}

Thus, the overall decoding delay can be interpreted as requiring that the delay profile satisfies the following constraint 
\bea\label{eq:delprfl}
    t_{\ell} + \tau_{\ell} \leq T,
\eea
where  $\ell \in [0,k-1]$.


\section{An Upper bound on Three-node Decode-and-Forward Relay Network Capacity}
\label{sec:upp-bnd}
In this section, we first define the capacity of the three-node $\mathsf{D\&F}$ relay network and then derive an upper bound on it. 
\begin{definition} The capacity of three-node $\mathsf{D\&F}$ relay network, denoted by $\mathsf{C}_{a_{1},b_{1},a_{2},b_{2},T}$, is the maximum rate achievable by $(a_{1},b_{1},a_{2},b_{2},T)$ streaming codes, i.e.,
\bean
    \mathsf{C}_{a_{1},b_{1},a_{2},b_{2},T} \triangleq \sup \left\{
    \begin{minipage}{0.095\textwidth}
    $\frac{k}{\max\{n_{1},n_{2}\}} $
    \end{minipage} 
    \middle|
    ~\begin{minipage}{0.125\textwidth}
    \small{
    There exists an \\ 
    $(a_{1},b_{1},a_{2},b_{2},T)$\\ 
    streaming code \\
    with parameters\\
    $(k,n_{1},n_{2})_{\mathbb{F}_{q}} $}
    \end{minipage} \right\}.
\eean
\end{definition}

\begin{theorem}
For any $(a_{1},b_{1},a_{2},b_{2},T)$,
\bea\label{eq:cap}
  \mathsf{C}_{a_{1},b_{1},a_{2},b_{2},T} 
  \leq \min\{\mathsf{C}_{a_{1},b_{1},T-b_{2}},\mathsf{C}_{a_{2},b_{2},T-b_{1}}\},
\eea
where $\mathsf{C}_{a_{i},b_{i},T_{i}}$ are the point-to-point channel capacities. 
\end{theorem}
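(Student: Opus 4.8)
The plan is to establish each of the two bounds in the minimum separately, by a ``degenerate channel'' or ``genie'' argument that collapses the relay network into a single point-to-point $\mathsf{DCSW}$ channel whose capacity is already known from \eqref{eq:rbpp}. Consider the bound $\mathsf{C}_{a_1,b_1,a_2,b_2,T}\le \mathsf{C}_{a_1,b_1,T-b_2}$. Take any $(a_1,b_1,a_2,b_2,T)$ streaming code with parameters $(k,n_1,n_2)_{\mathbb{F}_q}$. I would argue that this code, restricted appropriately, yields an $(a_1,b_1,T-b_2)$ point-to-point streaming code of rate $k/n_1$, so that $k/n_1\le \mathsf{C}_{a_1,b_1,T-b_2}$, and hence $k/\max\{n_1,n_2\}\le k/n_1\le \mathsf{C}_{a_1,b_1,T-b_2}$.

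The key step is the construction of this induced point-to-point code. First I would fix a worst-case erasure pattern on the $(r,d)$ channel: since any window of width $w_2=T_2+1$ may contain a burst of length $b_2$, there is a periodic pattern of $(r,d)$ erasures that deletes, for infinitely many indices $i$, the block of relay packets $\B{x}^{(r)}_{i},\ldots,\B{x}^{(r)}_{i+b_2-1}$. Because the relay operates causally in a symbol-wise $\mathsf{D\&F}$ fashion and, crucially, its encoding does not depend on the $(s,r)$ erasure pattern (the assumption imported from \cite{KhistiThreeNode}), for the destination to recover $\B{m}_i$ by deadline $i+T$ despite losing $\B{x}^{(r)}_i,\ldots,\B{x}^{(r)}_{i+b_2-1}$, it must in effect reconstruct $\B{m}_i$ from relay packets available no later than time $i+T-b_2$ — equivalently, from $(s,r)$-channel observations $\B{y}^{(r)}_0,\ldots,\B{y}^{(r)}_{i+T-b_2}$. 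I would make this precise by showing that the composite map from $\{\B{m}_j\}$ to $\{\B{x}_j\}$ together with the relay's decoding of the first hop constitutes a point-to-point code over the $(s,r)$ channel meeting delay $T-b_2$; since that channel is an $(a_1,b_1,T_1)$ $\mathsf{DCSW}$ channel and the effective delay budget is $T-b_2$, the induced code must correct every $(a_1,b_1,T-b_2)$-permissible erasure pattern, so its rate $k/n_1$ obeys \eqref{eq:rbpp} with parameters $(a_1,b_1,T-b_2)$.

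The symmetric bound $\mathsf{C}_{a_1,b_1,a_2,b_2,T}\le \mathsf{C}_{a_2,b_2,T-b_1}$ I would obtain by the mirror-image argument: now fix a worst-case burst of $b_1$ erasures on the $(s,r)$ channel, give the relay a genie that supplies the lost source packets (this only helps the code), and observe that the destination must still recover $\B{m}_i$ by time $i+T$ using $(r,d)$ observations, but the relevant relay packets carrying information about $\B{m}_i$ cannot be formed before time $i+b_1$ in the worst case, leaving an effective delay of $T-b_1$ for the second hop; the induced code over the $(r,d)$ channel, of rate $k/n_2$, then satisfies \eqref{eq:rbpp} with parameters $(a_2,b_2,T-b_1)$. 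Combining the two bounds gives \eqref{eq:cap}.

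I expect the main obstacle to be making the ``effective delay reduction'' rigorous without assuming a particular code structure — one must argue purely from the causality of encoding/decoding at $s$, $r$, $d$ and from the erasure-pattern-obliviousness of the relay that a $b_2$-burst on the second hop genuinely costs $b_2$ units of the delay budget for the first hop (and symmetrically), rather than this being an artifact of separate-hop coding. The clean way is to exhibit an explicit adversarial schedule: erase the $(r,d)$ link in a sliding burst of length $b_2$ tracking the deadline, so that for the target index $i$ the destination's view up to time $i+T$ depends only on relay transmissions up to time $i+T-b_2$, which in turn depend only on $(s,r)$ receptions up to time $i+T-b_2$; reliability of the overall code then forces reliability of the first-hop code at delay $T-b_2$. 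One should double-check that the window-width normalization $w=T+1$ used for point-to-point codes is applied consistently to the reduced-delay channels, and that edge cases with $b_1\ge T$ or $b_2\ge T$ (where the bound becomes vacuous or forces rate $0$) are handled gracefully.
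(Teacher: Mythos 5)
Your proposal takes essentially the same approach as the paper: for each of the two terms in the minimum you place an adversarial burst of length $b_2$ (resp.\ $b_1$) on the downstream (resp.\ upstream) hop at the critical position, which reduces the effective delay budget on the other hop to $T-b_2$ (resp.\ $T-b_1$), and then you invoke the point-to-point converse \eqref{eq:rbpp}. One small inaccuracy worth noting: in your first-bound sketch you initially place the $(r,d)$ burst at times $i,\ldots,i+b_2-1$, which does not truncate the destination's view to time $i+T-b_2$ (the later slots up to $i+T$ are still visible); the burst must sit at $i+T-b_2+1,\ldots,i+T$, as you yourself correct in the final paragraph with the ``sliding burst tracking the deadline'' schedule, and as the paper implicitly does.
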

\begin{proof}

The bound can be derived as follows. If the source transmits at a rate in excess of $\mathsf{C}_{a_{1},b_{1},T-b_{2}}$, then there is at least one permissible erasure pattern which will make it impossible to decode a packet transmitted at time $i$ by time $i+T-b_2$ at the relay. If the $(r,d)$ channel then experiences a burst of duration $b_2$, then this will make it impossible for the destination to recover packet $i$ by time $(i+T)$. Therefore, $  \mathsf{C}_{a_{1},b_{1},a_{2},b_{2},T} \leq \mathsf{C}_{a_{1},b_{1},T-b_{2}}$. The upper bound $  \mathsf{C}_{a_{1},b_{1},a_{2},b_{2},T} \leq \mathsf{C}_{a_{2},b_{2},T-b_{1}}$ follows from noting that an initial burst of duration $b_1$ on the $(s,r)$ channel, is equivalent to reducing the time available for the relay to convey information to the destination by an amount $b_1$. 
\end{proof}

\section{Staggered Diagonal Embedding}
\label{sec:SDE}
We use the staggered diagonal embedding ($\mathsf{SDE}$) approach introduced in \cite{NikRamVajKum} to construct a rate-optimal code over the three-node $\mathsf{D\&F}$ relay network. As shown in Section \ref{sec:constr}, the $\mathsf{SDE}$ approach enables matching of channel-level delay profiles to satisfy the overall decoding delay constraint. 
\begin{figure}[H]
\centering

\resizebox{0.4 \textwidth}{!}
{
\begin{tikzpicture}
\fill (-0.75,3) circle[radius=1pt];
\fill (-0.5,3) circle[radius=1pt];
\fill (-0.25,3) circle[radius=1pt];
\draw[step=1cm,black,very thin] (0,0) grid (9,6);
\draw (0.5,5.5) node{$c_0$};
\draw [ultra thick, draw=black, fill=orange, opacity=0.35] (0,5) rectangle (1,6); 
\draw (1.5,4.5) node{$c_1$};
\draw [ultra thick, draw=black, fill=orange, opacity=0.35] (1,4) rectangle (2,5); 
\draw (2.5,3.5) node{$c_2$};
\draw [ultra thick, draw=black, fill=orange, opacity=0.35] (2,3) rectangle (3,4); 
\draw (6.5,2.5) node{$c_3$};
\draw [ultra thick, draw=black, fill=orange, opacity=0.35] (6,2) rectangle (7,3); 
\draw (7.5,1.5) node{$c_4$};
\draw [ultra thick, draw=black, fill=orange, opacity=0.35] (7,1) rectangle (8,2); 
\draw (8.5,0.5) node{$c_5$};
\draw [ultra thick, draw=black, fill=orange, opacity=0.35] (8,0) rectangle (9,1); 
\draw [ultra thick, draw=black, opacity=0.5, pattern=north west lines] (1,-0.5) rectangle (7,6.5); 
\draw (4,6.8) node{burst of length $6$};
\draw (4.5,-1) node{each column represents a packet };
\fill (9.25,3) circle[radius=1pt];
\fill (9.5,3) circle[radius=1pt];
\fill (9.75,3) circle[radius=1pt];
\end{tikzpicture}
}
\caption{Staggered diagonal embedding of a codeword in $[6,3] \mathsf{MDS}$-base code within the packet stream. The contents of the codeword can be successfully recovered as a burst of length $6$ erases at most $3$ symbols of the diagonally-embedded codeword.} 
\label{fig:SS-code}
\end{figure}
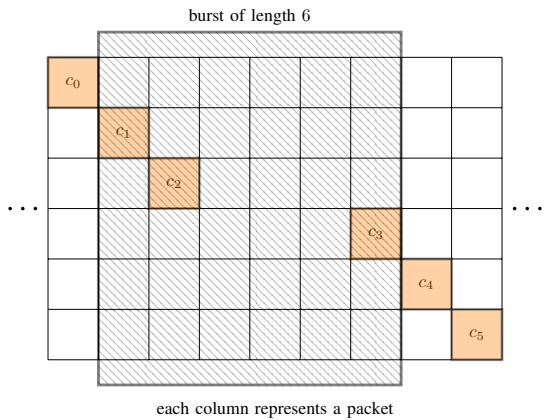




In \cite{NikRamVajKum}, for an $(a,b,T)$ $\mathsf{DCSW}$ channel, $\mathsf{SDE}$ approach was used to construct a rate-optimal streaming code for the parameter regime: $b~|~T-a+1$, with field size \textit{linear} in $T$. We briefly describe the $\mathsf{SDE}$ approach below.

Let $\mathcal{C}$ be an $[n,k]$ linear code in systematic form, with the first $k$ code symbols corresponding to the message symbols. Let $N \ge n$ be an integer and let $S$ be a subset of $[0:N-1]$ of size $n$: 
\bean
S=\{i_0, i_1, \dots , i_{n-1}\} \subseteq[0:N-1], 
\eean
where $0\triangleq i_0 < i_1 < \dots < i_{n-1}\triangleq N-1$. Let $\B{x}_{i}(\ell)$ denote the $(\ell+1)^{th}$ symbol of the coded packet $\B{x}_{i}$ for each $\ell \in {0, 1,\ldots,n-1} $.  Then for every time $i$, the collection of symbols 
\bean
\Big(\B{x}_{i+i_0}(0), \B{x}_{i+i_1}(1),\cdots,\B{x}_{i+i_{n-1}}(n-1)\Big)
\eean
forms a codeword in code $\mathcal{C}$. The set $S$ is called as the placement set as it determines the placement in time of the code symbols of $\mathcal{C}$ within the packet stream, and $N$ is referred to as the dispersion-span parameter. The packet-level code thus constructed through $\mathsf{SDE}$, has rate $\frac{k}{n}$. The code ${\cal C}$ is referred to as the base code.

\subsection{Construction of an $(a,b,T)$ streaming code using $\mathsf{SDE}$ of an $\mathsf{MDS}$-Base-Code}
In \cite{NikRamVajKum}, it was shown that the $\mathsf{SDE}$ approach yields $(a,b,T)$ streaming codes meeting the dispersion-span constraint $N\leq T+1$ for all parameters $\{a,b,T\}$. Let $T+1=mb+\delta_1$, with $0 \le \delta_1 < b$ and set $\delta_2 \triangleq \min\{\delta_1, a\}$. Let the base code $\mathcal{C}$ be an $[n=k+a ,k]$ $\mathsf{MDS}$ code, where $k \in \mathbb{Z}_{+}$. The authors of \cite{NikRamVajKum} selected the value of the parameter $k$ to be the maximum possible, which is $(m-1)a + \delta_2$ but for ensuring optimality over the three-node $\mathsf{D\&F}$ relay network we require a wider range for $k$, thus we assume that $k \leq (m-1)a + \delta_2$. Additionally let the dispersion-span parameter and dispersion set be given by: 
\bea
\label{eq:disp-span}
N =  n + (b-a).\floor{\frac{n-1}{a}},\\
\label{eq:disp-set}
S =\displaystyle\bigcup_{i=0}^{n-1} \left\{i+(b-a).\floor{\frac{i}{a}}\right\},
\eea
respectively. Since $k \leq (m-1)a + \delta_2$, it follows that
\begin{align*}
N &\leq ma +\delta_2 + (b-a).\floor{\frac{ma+\delta_2-1}{a}}\\
&= mb + \delta_2 \\
&\leq mb+\delta_1 =  T+1.
\end{align*}
The delay profile simply follows from the dispersion-span parameter and the displacement set; the decoding delay of the $j^{\text{th}}$ message symbol is
\bea \label{ds}
t_j = N-1-j-(b-a).\floor{\frac{j}{a}} \leq T,
\eea
where $j \in [0,k-1]$. Also, as shown in Fig. \eqref{fig:SS-code}, it can be seen that the erasure of any successive $b$ packets corresponds to the erasure of exactly $a$ code symbols belonging to the base code $\mathcal{C}$. Thus, due to the $\mathsf{MDS}$ nature of ${\cal C}$, the packet-level code can tolerate a burst of length $b$ and any $a$ random erasures. Therefore, the packet-level code constructed using the $\mathsf{SDE}$ approach is an $(a,b,T)$ streaming code. As the base code is an MDS code, the code construction requires \textit{linear} field size.

\begin{figure*}
\centering
\begin{subfigure}{0.9 \textwidth}
\resizebox{1 \textwidth}{!}
{
\begin{tikzpicture}
\draw [thick, draw=black] (-1,0) rectangle (2,0.5);
\draw [thick, draw=black] (-1,-0.5) rectangle (2,0);
\draw [thick, draw=black] (-1,-1) rectangle (2,-0.5);
\draw [thick, draw=black] (-1,-1.5) rectangle (2,-1);

\draw [thick, draw=black] (2,0) rectangle (4.5,0.5);
\draw [thick, draw=black] (2,-0.5) rectangle (4.5,0);
\draw [thick, draw=black] (2,-1) rectangle (4.5,-0.5);
\draw [thick, draw=black] (2,-1.5) rectangle (4.5,-1);

\draw [thick, draw=black] (4.5,0) rectangle (7,0.5);
\draw [thick, draw=black] (4.5,-0.5) rectangle (7,0);
\draw [thick, draw=black] (4.5,-1) rectangle (7,-0.5);
\draw [thick, draw=black] (4.5,-1.5) rectangle (7,-1);

\draw [thick, draw=black] (7,0) rectangle (9.5,0.5);
\draw [thick, draw=black] (7,-0.5) rectangle (9.5,0);
\draw [thick, draw=black] (7,-1) rectangle (9.5,-0.5);
\draw [thick, draw=black] (7,-1.5) rectangle (9.5,-1);

\draw [thick, draw=black] (9.5,0) rectangle (12,0.5);
\draw [thick, draw=black] (9.5,-0.5) rectangle (12,0);
\draw [thick, draw=black] (9.5,-1) rectangle (12,-0.5);
\draw [thick, draw=black] (9.5,-1.5) rectangle (12,-1);

\draw [thick, draw=black] (12,0) rectangle (14.5,0.5);
\draw [thick, draw=black] (12,-0.5) rectangle (14.5,0);
\draw [thick, draw=black] (12,-1) rectangle (14.5,-0.5);
\draw [thick, draw=black] (12,-1.5) rectangle (14.5,-1);

\draw [thick, draw=black] (14.5,0) rectangle (17,0.5);
\draw [thick, draw=black] (14.5,-0.5) rectangle (17,0);
\draw [thick, draw=black] (14.5,-1) rectangle (17,-0.5);
\draw [thick, draw=black] (14.5,-1.5) rectangle (17,-1);

\draw [thick, draw=black] (17,0) rectangle (19.5,0.5);
\draw [thick, draw=black] (17,-0.5) rectangle (19.5,0);
\draw [thick, draw=black] (17,-1) rectangle (19.5,-0.5);
\draw [thick, draw=black] (17,-1.5) rectangle (19.5,-1);

\draw (0.5,0.25) node{time $i$};
\draw (0.5,-0.25) node{$\B{m}_i[0]$};
\draw (0.5,-0.75) node{$\B{m}_{i}[1]$};
\draw (0.5,-1.25) node{$\B{m}_{i-4}[0] + \B{m}_{i-2}[1]$};
\draw (3.25,0.25) node{$10$};
\draw (3.25,-0.25) node{$\B{m}_{10}[0]$};
\draw (3.25,-0.75) node{$\B{m}_{10}[1]$};
\draw (3.25,-1.25) node{$\B{m}_{6}[0] + \B{m}_{8}[1]$};
\draw (5.75,0.25) node{$11$};
\draw (5.75,-0.25) node{$\B{m}_{11}[0]$};
\draw (5.75,-0.75) node{$\B{m}_{11}[1]$};
\draw (5.75,-1.25) node{$\B{m}_{7}[0] + \B{m}_{9}[1]$};
\draw (8.25,0.25) node{$12$};
\draw (8.35,-0.25) node{$\B{m}_{12}[0]$};
\draw (8.25,-0.75) node{$\B{m}_{12}[1]$};
\draw (8.25,-1.25) node{$\B{m}_{8}[0] + \B{m}_{10}[1]$};
\draw (10.75,0.25) node{$13$};
\draw (10.75,-0.25) node{$\B{m}_{13}[0]$};
\draw (10.75,-0.75) node{$\B{m}_{13}[1]$};
\draw (10.75,-1.25) node{$\B{m}_{9}[0] + \B{m}_{11}[1]$};
\draw (13.25,0.25) node{$14$};
\draw (13.25,-0.25) node{$\B{m}_{14}[0]$};
\draw (13.25,-0.75) node{$\B{m}_{14}[1]$};
\draw (13.25,-1.25) node{$\B{m}_{10}[0] + \B{m}_{12}[1]$};
\draw (15.75,0.25) node{$15$};
\draw (15.75,-0.25) node{$\B{m}_{15}[0]$};
\draw (15.75,-0.75) node{$\B{m}_{15}[1]$};
\draw (15.75,-1.25) node{$\B{m}_{11}[0] + \B{m}_{13}[1]$};
\draw (18.25,0.25) node{$16$};
\draw (18.25,-0.25) node{$\B{m}_{16}[0]$};
\draw (18.25,-0.75) node{$\B{m}_{16}[1]$};
\draw (18.25,-1.25) node{$\B{m}_{12}[0] + \B{m}_{14}[1]$};

\draw [ultra thick, draw=black, fill=red, opacity=0.2] (2,-0.5) rectangle (4.5,0); 
\draw [ultra thick, draw=black, fill=red, opacity=0.2] (7,-1) rectangle (9.5,-0.5);
\draw [ultra thick, draw=black, fill=red, opacity=0.2] (12,-1.5) rectangle (14.5,-1);

\draw [ultra thick, draw=black, fill=blue, opacity=0.2] (4.5,-0.5) rectangle (7,0); 
\draw [ultra thick, draw=black, fill=blue, opacity=0.2] (9.5,-1) rectangle (12,-0.5);
\draw [ultra thick, draw=black, fill=blue, opacity=0.2] (14.5,-1.5) rectangle (17,-1);

\draw [ultra thick, draw=black, fill=green, opacity=0.2] (7,-0.5) rectangle (9.5,0); 
\draw [ultra thick, draw=black, fill=green, opacity=0.2] (12,-1) rectangle (14.5,-0.5);
\draw [ultra thick, draw=black, fill=green, opacity=0.2] (17,-1.5) rectangle (19.5,-1);
\end{tikzpicture}
}
\caption{Transmitted symbols at node $s$ from time $10$ to $16$} 
\label{fig:code-sr}
\end{subfigure}

\vspace{1em}

\begin{subfigure}{1\textwidth}
\resizebox{1 \textwidth}{!}
{
\begin{tikzpicture}
\draw [thick, draw=black] (-6,0) rectangle (2,0.5);
\draw [thick, draw=black] (-6,-0.5) rectangle (2,0);
\draw [thick, draw=black] (-6,-1) rectangle (2,-0.5);
\draw [thick, draw=black] (-6,-1.5) rectangle (2,-1);

\draw [thick, draw=black] (-3,0) rectangle (-0.5,0.5);
\draw [thick, draw=black] (-3,-0.5) rectangle (-0.5,0);
\draw [thick, draw=black] (-3,-1) rectangle (-0.5,-0.5);
\draw [thick, draw=black] (-3,-1.5) rectangle (-0.5,-1);

\draw [thick, draw=black] (-0.5,0) rectangle (2,0.5);
\draw [thick, draw=black] (-0.5,-0.5) rectangle (2,0);
\draw [thick, draw=black] (-0.5,-1) rectangle (2,-0.5);
\draw [thick, draw=black] (-0.5,-1.5) rectangle (2,-1);

\draw [thick, draw=black] (2,0) rectangle (4.5,0.5);
\draw [thick, draw=black] (2,-0.5) rectangle (4.5,0);
\draw [thick, draw=black] (2,-1) rectangle (4.5,-0.5);
\draw [thick, draw=black] (2,-1.5) rectangle (4.5,-1);

\draw [thick, draw=black] (4.5,0) rectangle (7,0.5);
\draw [thick, draw=black] (4.5,-0.5) rectangle (7,0);
\draw [thick, draw=black] (4.5,-1) rectangle (7,-0.5);
\draw [thick, draw=black] (4.5,-1.5) rectangle (7,-1);

\draw [thick, draw=black] (7,0) rectangle (9.5,0.5);
\draw [thick, draw=black] (7,-0.5) rectangle (9.5,0);
\draw [thick, draw=black] (7,-1) rectangle (9.5,-0.5);
\draw [thick, draw=black] (7,-1.5) rectangle (9.5,-1);

\draw [thick, draw=black] (9.5,0) rectangle (12,0.5);
\draw [thick, draw=black] (9.5,-0.5) rectangle (12,0);
\draw [thick, draw=black] (9.5,-1) rectangle (12,-0.5);
\draw [thick, draw=black] (9.5,-1.5) rectangle (12,-1);

\draw [thick, draw=black] (12,0) rectangle (14.5,0.5);
\draw [thick, draw=black] (12,-0.5) rectangle (14.5,0);
\draw [thick, draw=black] (12,-1) rectangle (14.5,-0.5);
\draw [thick, draw=black] (12,-1.5) rectangle (14.5,-1);

\draw [thick, draw=black] (14.5,0) rectangle (17,0.5);
\draw [thick, draw=black] (14.5,-0.5) rectangle (17,0);
\draw [thick, draw=black] (14.5,-1) rectangle (17,-0.5);
\draw [thick, draw=black] (14.5,-1.5) rectangle (17,-1);

\draw [thick, draw=black] (17,0) rectangle (19.5,0.5);
\draw [thick, draw=black] (17,-0.5) rectangle (19.5,0);
\draw [thick, draw=black] (17,-1) rectangle (19.5,-0.5);
\draw [thick, draw=black] (17,-1.5) rectangle (19.5,-1);

\draw (-4.5,0.25) node{time $i$};
\draw (-4.5,-0.25) node{$\B{m}_{i-2}[1]$};
\draw (-4.5,-0.75) node{$\B{m}_{i-4}[0]$};
\draw (-4.5,-1.25) node{$\B{m}_{i-8}[1] + \B{m}_{i-7}[0]$};
\draw (-1.75,0.25) node{$12$};
\draw (-1.75,-0.25) node{$\B{m}_{10}[1]$};
\draw (-1.75,-0.75) node{$\B{m}_{8}[0]$};
\draw (-1.75,-1.25) node{$\B{m}_{4}[1] + \B{m}_{5}[0]$};
\draw (0.75,0.25) node{$13$};
\draw (0.75,-0.25) node{$\B{m}_{11}[1]$};
\draw (0.75,-0.75) node{$\B{m}_{9}[0]$};
\draw (0.75,-1.25) node{$\B{m}_{5}[1] + \B{m}_{6}[0]$};
\draw (3.25,0.25) node{$14$};
\draw (3.25,-0.25) node{$\B{m}_{12}[1]$};
\draw (3.25,-0.75) node{$\B{m}_{10}[0]$};
\draw (3.25,-1.25) node{$\B{m}_{6}[1] + \B{m}_{7}[0]$};
\draw (5.75,0.25) node{$15$};
\draw (5.75,-0.25) node{$\B{m}_{13}[1]$};
\draw (5.75,-0.75) node{$\B{m}_{11}[0]$};
\draw (5.75,-1.25) node{$\B{m}_{7}[1] + \B{m}_{8}[0]$};
\draw (8.25,0.25) node{$16$};
\draw (8.35,-0.25) node{$\B{m}_{14}[1]$};
\draw (8.25,-0.75) node{$\B{m}_{12}[0]$};
\draw (8.25,-1.25) node{$\B{m}_{8}[1] + \B{m}_{9}[0]$};
\draw (10.75,0.25) node{$17$};
\draw (10.75,-0.25) node{$\B{m}_{15}[1]$};
\draw (10.75,-0.75) node{$\B{m}_{13}[0]$};
\draw (10.75,-1.25) node{$\B{m}_{9}[1] + \B{m}_{10}[0]$};
\draw (13.25,0.25) node{$18$};
\draw (13.25,-0.25) node{$\B{m}_{16}[1]$};
\draw (13.25,-0.75) node{$\B{m}_{14}[0]$};
\draw (13.25,-1.25) node{$\B{m}_{10}[1] + \B{m}_{11}[0]$};
\draw (15.75,0.25) node{$19$};
\draw (15.75,-0.25) node{$\B{m}_{17}[1]$};
\draw (15.75,-0.75) node{$\B{m}_{15}[0]$};
\draw (15.75,-1.25) node{$\B{m}_{11}[1] + \B{m}_{12}[0]$};
\draw (18.25,0.25) node{$20$};
\draw (18.25,-0.25) node{$\B{m}_{18}[1]$};
\draw (18.25,-0.75) node{$\B{m}_{16}[0]$};
\draw (18.25,-1.25) node{$\B{m}_{12}[1] + \B{m}_{13}[0]$};

\draw [ultra thick, draw=black, fill=yellow, opacity=0.2] (-3,-0.5) rectangle (-0.5,0); 
\draw [ultra thick, draw=black, fill=yellow, opacity=0.2] (4.5,-1) rectangle (7,-0.5);
\draw [ultra thick, draw=black, fill=yellow, opacity=0.2] (12,-1.5) rectangle (14.5,-1);

\draw [ultra thick, draw=black, fill=magenta, opacity=0.2] (-0.5,-0.5) rectangle (2,0); 
\draw [ultra thick, draw=black, fill=magenta, opacity=0.2] (7,-1) rectangle (9.5,-0.5);
\draw [ultra thick, draw=black, fill=magenta, opacity=0.2] (14.5,-1.5) rectangle (17,-1);

\draw [ultra thick, draw=black, fill=orange, opacity=0.2] (2,-0.5) rectangle (4.5,0); 
\draw [ultra thick, draw=black, fill=orange, opacity=0.2] (9.5,-1) rectangle (12,-0.5);
\draw [ultra thick, draw=black, fill=orange, opacity=0.2] (17,-1.5) rectangle (19.5,-1);
\end{tikzpicture}
}
\caption{Transmitted symbols at node $r$ from time $12$ to $20$} 
\label{fig:code-rd}
\end{subfigure}

\vspace{1em}

\begin{subfigure}{0.9\textwidth}
\resizebox{1 \textwidth}{!}
{
\begin{tikzpicture}
\draw [thick, draw=black] (-6,0) rectangle (2,0.5);
\draw [thick, draw=black] (-6,-0.5) rectangle (2,0);
\draw [thick, draw=black] (-6,-1) rectangle (2,-0.5);

\draw [thick, draw=black] (-3,0) rectangle (-0.5,0.5);
\draw [thick, draw=black] (-3,-0.5) rectangle (-0.5,0);
\draw [thick, draw=black] (-3,-1) rectangle (-0.5,-0.5);

\draw [thick, draw=black] (-0.5,0) rectangle (2,0.5);
\draw [thick, draw=black] (-0.5,-0.5) rectangle (2,0);
\draw [thick, draw=black] (-0.5,-1) rectangle (2,-0.5);

\draw [thick, draw=black] (2,0) rectangle (4.5,0.5);
\draw [thick, draw=black] (2,-0.5) rectangle (4.5,0);
\draw [thick, draw=black] (2,-1) rectangle (4.5,-0.5);

\draw [thick, draw=black] (4.5,0) rectangle (7,0.5);
\draw [thick, draw=black] (4.5,-0.5) rectangle (7,0);
\draw [thick, draw=black] (4.5,-1) rectangle (7,-0.5);

\draw [thick, draw=black] (7,0) rectangle (9.5,0.5);
\draw [thick, draw=black] (7,-0.5) rectangle (9.5,0);
\draw [thick, draw=black] (7,-1) rectangle (9.5,-0.5);

\draw [thick, draw=black] (9.5,0) rectangle (12,0.5);
\draw [thick, draw=black] (9.5,-0.5) rectangle (12,0);
\draw [thick, draw=black] (9.5,-1) rectangle (12,-0.5);

\draw [thick, draw=black] (12,0) rectangle (14.5,0.5);
\draw [thick, draw=black] (12,-0.5) rectangle (14.5,0);
\draw [thick, draw=black] (12,-1) rectangle (14.5,-0.5);

\draw (-4.5,0.25) node{time $i$};
\draw (-4.5,-0.25) node{$\B{m}_{i-8}[1]$};
\draw (-4.5,-0.75) node{$\B{m}_{i-7}[0]$};
\draw (-1.75,0.25) node{$15$};
\draw (-1.75,-0.25) node{$\B{m}_{7}[1]$};
\draw (-1.75,-0.75) node{$\B{m}_{8}[0]$};
\draw (0.75,0.25) node{$16$};
\draw (0.75,-0.25) node{$\B{m}_{8}[1]$};
\draw (0.75,-0.75) node{$\B{m}_{9}[0]$};
\draw (3.25,0.25) node{$17$};
\draw (3.25,-0.25) node{$\B{m}_{9}[1]$};
\draw (3.25,-0.75) node{$\B{m}_{10}[0]$};
\draw (5.75,0.25) node{$18$};
\draw (5.75,-0.25) node{$\B{m}_{10}[1]$};
\draw (5.75,-0.75) node{$\B{m}_{11}[0]$};
\draw (8.25,0.25) node{$19$};
\draw (8.35,-0.25) node{$\B{m}_{11}[1]$};
\draw (8.25,-0.75) node{$\B{m}_{12}[0]$};
\draw (10.75,0.25) node{$20$};
\draw (10.75,-0.25) node{$\B{m}_{12}[1]$};
\draw (10.75,-0.75) node{$\B{m}_{13}[0]$};
\draw (13.25,0.25) node{$21$};
\draw (13.25,-0.25) node{$\B{m}_{13}[1]$};
\draw (13.25,-0.75) node{$\B{m}_{14}[0]$};
\end{tikzpicture}
}
\caption{Recovered symbols at node $d$ from time $15$ to $21$} 
\label{fig:code-rd}
\end{subfigure}
\vspace{1em}
\caption{Rate-optimal $(a_1=1, b_1=2, a_2 =1, b_2=3, T=8)$ streaming code}
\label{fig:constr}
\end{figure*}

\section{Rate-optimal Code Construction for the Three-Node $\mathsf{D\&F}$ Relay Network}
\label{sec:constr}

We first present an example, shown in Fig. \eqref{fig:constr}, to demonstrate how $\mathsf{SDE}$ approach and $\mathsf{SW~D\&F}$ strategy are employed to construct a rate-optimal $(a_1=1, b_1=2, a_2 =1, b_2=3, T=8)$ streaming code. From \eqref{eq:cap}, the capacity is upper bounded by 
\begin{align*}
\mathsf{C}_{a_1,b_1,a_2,b_2,} &\leq \min\{\mathsf{C}_{a_{1},b_{1},T-b_{2}},\mathsf{C}_{a_{2},b_{2},T-b_{1}}\}\\
&=\min\left\{\frac{8-3-1+1}{8-3-1+1+2},\frac{8-2-1+1}{8-2-1+1+3}\right\} \\
&= \min\left\{\frac{5}{7},\frac{6}{9}\right\} = \frac{2}{3}.
\end{align*}
For the $(s,r)$ channel, we use an $[3,2]$ $\mathsf{MDS}$ code as the base code to construct $(a_1=1,b_1=2,T_1=5)$ streaming code. Similarly, for the $(r,d)$ channel, we use an $[3,2]$ $\mathsf{MDS}$ code as the base code to construct $(a_2=1,b_2=3,T_2=6)$ streaming code. While re-encoding the decoded message symbols at the relay node, the order of the packet symbols within a packet is flipped. Therefore, from \eqref{ds}, the delay profile of the packet is 
\bean
\big((t_0 = 4,\tau_0 = 3),(t_1= 2,\tau_1 = 6)\big).
\eean
From the delay profile, it can be seen that the overall decoding delay for each packet symbol is $\leq 8$ packets. Therefore, the code construction is a rate-optimal $(a_1=1, b_1=2, a_2 =1, b_2=3, T=8)$ streaming code.

To construct a rate-optimal $(a_{1},a_{2},b_{1},b_{2},T)$ streaming code over the three-node $\mathsf{D\&F}$ relay network, we require the parameters $k,n_1,n_2$ to satisfy:
\begin{align*}
&\mathsf{R}_{(s,r,d)} = \frac{k}{\max\{n_{1},n_{2}\}} = \min\{\mathsf{C}_{a_{1},b_{1},T-b_{2}},\mathsf{C}_{a_{2},b_{2},T-b_{1}}\}.
\end{align*}
It can be shown that this leads to:
\begin{equation}\label{rok}
\resizebox{.43 \textwidth}{!}
{$ k = \max\left\{a'_{1},a'_{2}\right\}.\min\left\{\frac{T-b_{2}-a_{1}+1}{b_{1}},\frac{T-b_{1}-a_{2}+1}{b_{2}}\right\}$,}
\end{equation} 
where $a'_{1} = n_1 - k$ and  $a'_{2} = n_2 -k$. We now describe the code construction. 
\begin{constr}\label{cons}~\\
Let $a\triangleq\max\{a_1,a_2\},b'_u \triangleq\max\{b_u,a\},~\alpha \triangleq T+1-b'_{1}-b'_{2}-a$, for $u\in[1,2]$. 
We select $T_1=T-b_2'$ and $T_2 = T-b_1'$ as the channel-level decoding delay constraints for the $(s,r)$ and $(r,d)$ channels, respectively. Thus, the $(s,r)$ and $(r,d)$ channels are modeled as $(a_1,b_1,T-b_2')$ DCSW and $(a_2,b_2,T-b_1')$ DCSW channels. 
We then construct $(a,b_1',T-b_2')$ and $(a,b_2',T-b_1')$ streaming codes for the $(s,r)$ and $(r,d)$ channels, respectively, using $\mathsf{SDE}$ of $[\floor{k}+a , \floor{k}]$ $\mathsf{MDS}$ base codes for both the channels, where $k$ is
\begin{align*}
&k = \max\left\{a,a\right\}.\min\left\{\frac{T-b'_{2}-a+1}{b'_{1}},\frac{T-b'_{1}-a+1}{b'_{2}}\right\}\\
&= a.\min\left\{\frac{\alpha}{b'_{1}}+1,\frac{\alpha}{b'_{2}}+1\right\}= a.\frac{T-\min\{b'_{1},b'_{2}\}-a+1}{\max\{b'_{1},b'_{2}\}}. 
\end{align*}
Let $u \in[1:2]$, from \eqref{eq:disp-span}, the dispersion-span parameters are
\begin{align*}
    N_u = n_u + (b_u'-a).\floor{\frac{n_u-1}{a}}.
\end{align*}
While re-encoding the message symbols at the relay node, the order of the message symbols within a packet is flipped following the approach adopted in \cite{KhistiThreeNode}. From \eqref{ds}, it follows that the decoding delays of the $j^{\text{th}}$ message symbol across the $(s,r)$ and $(r,d)$ channels are given by
\begin{align*}
t_j &= N_1-1-j-(b_1'-a).\floor{\frac{j}{a}},\\
\tau_j &= N_2-1-(k-1-j)-(b_2'-a).\floor{\frac{k-1-j}{a}},
\end{align*}
respectively, where $j \in [0:k-1]$. 
\end{constr}

We next present a lemma involving numerics.

\begin{lemma}\label{pr:2} Let $x,y,z \in \mathbb{Z}, z>0$. Let $x=q_{1}z+r_{1}$ and $y=q_{2}z+r_{2}$, where $q_{1}=\floor{\frac{x}{z}}, r_{1}\equiv x \bmod z, q_{2}=\floor{\frac{y}{z}}, r_{2}\equiv y \bmod z.$
\begin{align*}
    \floor{\frac{x}{z}-\frac{y}{z}} 
    = \begin{cases} 
      q_{1}-q_{2} &  r_{1}\geq r_{2} \\
      q_{1}-q_{2}-1 &  r_{1} < r_{2}.
  \end{cases}
\end{align*}
\end{lemma}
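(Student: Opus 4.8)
The plan is to write $x/z$ and $y/z$ in terms of their quotient--remainder decompositions and reduce the floor of the difference to a statement about the fractional part $\frac{r_1 - r_2}{z}$. Concretely, since $x = q_1 z + r_1$ and $y = q_2 z + r_2$ with $0 \le r_1, r_2 < z$, we have
\[
\frac{x}{z} - \frac{y}{z} = (q_1 - q_2) + \frac{r_1 - r_2}{z}.
\]
Because $q_1 - q_2 \in \mathbb{Z}$, the floor function splits off that integer cleanly: $\floor{\frac{x}{z} - \frac{y}{z}} = (q_1 - q_2) + \floor{\frac{r_1 - r_2}{z}}$. So the entire claim reduces to evaluating $\floor{\frac{r_1 - r_2}{z}}$.

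Next I would bound $r_1 - r_2$. Since $0 \le r_1 \le z-1$ and $0 \le r_2 \le z-1$, we get $-(z-1) \le r_1 - r_2 \le z-1$, hence $\frac{r_1 - r_2}{z} \in (-1, 1)$. Therefore $\floor{\frac{r_1 - r_2}{z}}$ is either $0$ or $-1$: it equals $0$ exactly when $r_1 - r_2 \ge 0$, i.e. $r_1 \ge r_2$, and it equals $-1$ exactly when $r_1 - r_2 < 0$, i.e. $r_1 < r_2$ (here using that $\frac{r_1-r_2}{z} > -1$ strictly, so the floor cannot drop below $-1$). Substituting back gives $\floor{\frac{x}{z} - \frac{y}{z}} = q_1 - q_2$ when $r_1 \ge r_2$ and $q_1 - q_2 - 1$ when $r_1 < r_2$, which is exactly the claimed case split.

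There is essentially no obstacle here; the only point requiring a moment's care is the strictness of the inequalities bounding $r_1 - r_2$, which is what guarantees the fractional part lies strictly inside $(-1,1)$ and hence that the floor takes only the two stated values and no others. I would state the bounds $0 \le r_i \le z - 1$ explicitly (valid since $z > 0$ and $r_i$ is a genuine remainder) and note the two mutually exclusive exhaustive cases $r_1 \ge r_2$ and $r_1 < r_2$ to close the argument. The whole proof is three or four lines of arithmetic.
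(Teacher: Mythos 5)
Your proof follows essentially the same route as the paper's: substitute the quotient--remainder decompositions, pull the integer $q_1 - q_2$ out of the floor, and reduce to evaluating $\floor{\frac{r_1 - r_2}{z}}$. The only difference is that you make the final step more explicit by bounding $\frac{r_1 - r_2}{z}$ strictly inside $(-1,1)$, whereas the paper states the case split without spelling out those bounds; that added care is harmless and arguably an improvement.
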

\begin{proof}
\begin{align*}
    \floor{\frac{x}{z}-\frac{y}{z}} &=\floor{\frac{q_{1}z+r_{1}}{z}-\frac{q_{2}z+r_{2}}{z}}\\
    &=\floor{q_{1}-q_{2}+\frac{r_{1}}{z}-\frac{r_{2}}{z}}\\
    &=q_{1}-q_{2}+\floor{\frac{r_{1}}{z}-\frac{r_{2}}{z}}
    \intertext{from which the result follows}
    &= \begin{cases} 
      q_{1}-q_{2} &  r_{1}\geq r_{2} \\
      q_{1}-q_{2}-1 &  r_{1} < r_{2}.
  \end{cases}
\end{align*}
\end{proof}

\begin{theorem}
If $T,a_1,a_2,b_1,b_2$ satisfy:
\begin{align*}
    \max\{b'_{1},b'_{2}\} ~ &| ~ \alpha,~ a_1=a_2 ~\text{OR} ~b_1=b_2
\end{align*}
then construction \ref{cons} is a rate-optimal $(a_{1},a_{2},b_{1},b_{2},T)$ streaming code.
\end{theorem}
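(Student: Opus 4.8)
The plan is to verify three things in order: (1) that the two $\mathsf{SDE}$ base codes of Construction~\ref{cons} are well-defined and each corrects the erasures of its own channel; (2) that the composed delay profile satisfies \eqref{eq:delprfl}; and (3) that the achieved rate meets the bound \eqref{eq:cap}. Since the statement and the construction are invariant under swapping the roles of the two channels, I may assume throughout that $b'_1\geq b'_2$. I would first extract closed forms from the divisibility hypothesis: with $\beta\triangleq\alpha/\max\{b'_1,b'_2\}=\alpha/b'_1$, the hypothesis (together with $\alpha\geq0$, needed for $k\geq1$) makes $\beta\in\mathbb{Z}_{\geq 0}$, so $k=a(\beta+1)\in\mathbb{Z}_{+}$, $n_1=n_2=n\triangleq k+a=a(\beta+2)$, $\floor{(n-1)/a}=\beta+1$, and therefore $N_u=b'_u(\beta+1)+a$ by \eqref{eq:disp-span}. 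Since $\alpha=\beta b'_1$ we have $T+1=b'_1+b'_2+a+\beta b'_1$, from which $N_1=b'_1(\beta+1)+a=T+1-b'_2=T_1+1$ and $N_2=b'_2(\beta+1)+a\leq b'_1\beta+b'_2+a=T+1-b'_1=T_2+1$. Both dispersion spans therefore fit (equivalently, the dimension bound of Section~\ref{sec:SDE} holds, in fact with equality), so by the $\mathsf{SDE}$ guarantee the base codes give an $(a,b'_1,T-b'_2)$ and an $(a,b'_2,T-b'_1)$ streaming code; and since $a\geq a_u$ and $b'_u\geq b_u$, each a fortiori corrects every permissible erasure pattern of the corresponding $(a_u,b_u,\cdot)$ $\mathsf{DCSW}$ channel.

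The technical core --- and the step I expect to be the main obstacle --- is the overall delay bound. Using the relay's symbol-order flip together with \eqref{ds},
\[
t_j+\tau_j=N_1+N_2-k-1-(b'_1-a)\floor{j/a}-(b'_2-a)\floor{(k-1-j)/a},\qquad j\in[0:k-1].
\]
A priori the right-hand side could be as large as $(T-b'_2)+(T-b'_1)$, so one has to genuinely use the joint staggering of the two codes; the device that does it is Lemma~\ref{pr:2}. Since $k-1=a\beta+(a-1)$ and the remainder $a-1$ always dominates $j\bmod a$, Lemma~\ref{pr:2} (with $z=a$) gives $\floor{(k-1-j)/a}=\beta-\floor{j/a}$, which removes one floor term. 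Substituting this and the value of $N_1+N_2-k-1$ and using $T+1=b'_1+b'_2+a+\beta b'_1$, the inequality $t_j+\tau_j\leq T$ collapses to $(b'_1-b'_2)\floor{j/a}\geq0$, which is immediate from $b'_1\geq b'_2$. Hence \eqref{eq:delprfl} holds for every message symbol, so Construction~\ref{cons} is an $(a_1,a_2,b_1,b_2,T)$ streaming code. It is precisely the divisibility hypothesis that forces $N_1$ to land exactly on $T_1+1$, which is what makes this bound tight.

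For rate-optimality I would compute $\mathsf{R}_{(s,r,d)}=k/(k+a)=\frac{T-\min\{b'_1,b'_2\}-a+1}{T-\min\{b'_1,b'_2\}-a+1+\max\{b'_1,b'_2\}}$ and show it equals $\min\{\mathsf{C}_{a_1,b_1,T-b_2},\mathsf{C}_{a_2,b_2,T-b_1}\}$; together with \eqref{eq:cap} this finishes the proof. This is exactly where the hypothesis ``$a_1=a_2$ or $b_1=b_2$'' is unavoidable, because $\mathsf{R}_{(s,r,d)}$ depends only on $\max\{a_1,a_2\}$ and on $\{b'_1,b'_2\}$ whereas the bound genuinely depends on all of $a_1,a_2,b_1,b_2$. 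If $a_1=a_2=a$, then $b'_u=b_u$ (as $b_u\geq a_u=a$), and with $b_1\geq b_2$ one gets $\mathsf{R}_{(s,r,d)}=\mathsf{C}_{a,b_1,T-b_2}$; a one-line computation shows $\mathsf{C}_{a,b_1,T-b_2}\leq\mathsf{C}_{a,b_2,T-b_1}$ iff $T-a+1\geq b_1+b_2$, which holds since $\alpha\geq0$, so the minimum equals $\mathsf{R}_{(s,r,d)}$. If $b_1=b_2=b$, then $a=\max\{a_1,a_2\}\leq b$, so $b'_1=b'_2=b$, $\mathsf{R}_{(s,r,d)}=\frac{T-b-a+1}{T-a+1}$, and since $\mathsf{C}_{a_i,b,T-b}=1-\frac{b}{T-a_i+1}$ decreases in $a_i$ the minimum over $i\in\{1,2\}$ is attained at $a_i=a$ and equals $\mathsf{R}_{(s,r,d)}$. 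The remaining checks (the dimension bound of Section~\ref{sec:SDE} and the elementary algebra of the rate formulas) are routine.
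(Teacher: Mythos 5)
Your proof is correct and, for the delay-constraint verification (the technical core), follows essentially the same route as the paper's: apply Lemma~\ref{pr:2} to collapse $\floor{(k-1-j)/a}$ to $\beta-\floor{j/a}$ (the paper writes the equivalent $\frac{k}{a}+\floor{\frac{-1-j}{a}}=\frac{k}{a}-1-\floor{j/a}$), then reduce $t_j+\tau_j\leq T$ to a sign check on $(b'_1-b'_2)\floor{j/a}$. Your WLOG assumption $b'_1\geq b'_2$ simply merges the paper's Cases~I and~II into one, and your $\beta$-parameterization makes the divisibility hypothesis land $N_1$ exactly on $T_1+1$ in a more transparent way than the paper's direct substitution $N_u=a+b'_u k/a$.

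Where you go meaningfully beyond the paper is the rate-optimality step. The paper's proof says only ``Since $\max\{b'_1,b'_2\}\mid(T-\min\{b'_1,b'_2\}+1-a)$, $\floor{k}=k$. Therefore, when $a_1=a_2$ or $b_1=b_2$, Construction~\ref{cons} \ldots\ is rate-optimal,'' deferring the actual rate match to equation~\eqref{rok}, which is itself introduced with an unproved ``It can be shown that \ldots.'' You supply that missing argument: you compute $\mathsf{R}_{(s,r,d)}=k/(k+a)$, observe that under $a_1=a_2$ one has $b'_u=b_u$ and the comparison of $\mathsf{C}_{a,b_1,T-b_2}$ against $\mathsf{C}_{a,b_2,T-b_1}$ reduces via cross-multiplication to $T-a+1\geq b_1+b_2$ (guaranteed by $\alpha\geq 0$), while under $b_1=b_2=b$ one has $b'_1=b'_2=b$ and $\mathsf{C}_{a_i,b,T-b}=1-\tfrac{b}{T-a_i+1}$ is monotone decreasing in $a_i$. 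This both closes a gap in the paper's exposition and makes it clear exactly why the hypothesis ``$a_1=a_2$ or $b_1=b_2$'' is needed --- the achieved rate depends only on $\max\{a_1,a_2\}$ and $\{b'_1,b'_2\}$, so the condition is what forces the two-sided upper bound to collapse to that quantity.
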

\begin{proof}
Since $\max\{b'_{1},b'_{2}\} | (T-\min\{b'_{1},b'_{2}\}+1-a)$, 
\bean
\floor{k} = k.
\eean
Therefore, when $a_1=a_2$ or $b_1=b_2$, Construction \ref{cons} in this parameter regime is rate-optimal. Since, in this parameter regime $a\mid k$, it follows that
\begin{align*}
    N_u &= n_u + (b_u'-a)\left(\frac{n_u}{a}-1\right)\\
    &= k+a + (b_u'-a)\frac{k}{a} = a + b_u'\frac{k}{a}.
\end{align*}
It can be verified that the dispersion-span constraint is met for the both the channels. From \eqref{eq:delprfl}, we have that for $j\in[0,k-1], t_j + \tau_j \leq T$ or equivalently,
\begin{align*}
\resizebox{0.48\textwidth}{!}
{$N_1 + N_2 -k-1-(b_1'-a).\floor{\frac{j}{a}}-(b_2'-a).\floor{\frac{k-1-j}{a}} \leq T$.}
\end{align*}
Consider the  L.H.S. $\triangleq f(j)$  of the above inequality,
\begin{align*}
\resizebox{.49 \textwidth}{!}
{
$f(j) = N_1+N_2-k-1-(b_1'-a)\floor{\frac{j}{a}}-(b_2'-a)\floor{\frac{k-1-j}{a}}.$
}
\end{align*}
Since $a \mid k$, we have that $f(j)$ is equal to 
\begin{align*}
\resizebox{.49 \textwidth}{!}
{
$N_1 + N_2 -k-1 -(b_1'-a)\floor{\frac{j}{a}} - (b_2'-a)\left(\frac{k}{a}+\floor{\frac{-1-j}{a}}\right)$.
}
\end{align*}
From Lemma \ref{pr:2}, it follows that $\floor{\frac{-1}{a}-\frac{j}{a}} = \floor{\frac{-1}{a}} -\floor{\frac{j}{a}} = -1 -\floor{\frac{j}{a}}$. Therefore, 
\begin{align*}
f(j)&= N_1 + N_2 -1-a-b_2'\frac{k}{a} +b_2' - (b_1'-b_2')\floor{\frac{j}{a}}\\
&= N_1 -1 +b_2' - (b_1'-b_2')\floor{\frac{j}{a}}.
\end{align*}
For the inequality to hold for all $j \in [0:k-1]$, the following constriant must be satisfied:
\begin{align*}
\underset{j \in[0:k-1]}{\max}~f(j) ~\leq~ T.
\end{align*}
\textbf{Case I: $b_1' \geq b_2'$}~: If $b_1' - b_2' > 0$ then $f(j)$ is maximum at $j=0$, else $b_1' = b_2'$ then $f(j) = N_1 -1 +b_2'$. Therefore, 
\begin{align*}
\underset{j \in[0:k-1]}{\max}~f(j) &=  N_1 -1 +b_2'.
\intertext{It then follows from the dispersion-span constraint that}
\underset{j \in[0:k-1]}{\max}~f(j)&\leq T-b_2'+1 -1+b_2' = T.
\end{align*}
\textbf{Case II: $b_1' < b_2'$}~: As $b_1' - b_2' < 0$, $f(j)$ is maximum when $j=k-1$. Therefore,
\begin{align*}
\underset{j \in[0:k-1]}{\max}~f(j) &= N_1 -1 +b_2' - (b_1'-b_2')\floor{\frac{k-1}{a}}.
\intertext{Since $a \mid k$,}
\underset{j \in[0:k-1]}{\max}~f(j)&= N_1 -1 +b_2' - (b_1'-b_2')\left(\frac{k}{a}-1\right)\\
&= N_1 -1 +b_1' + (b_2'-b_1')\left(\frac{k}{a}\right)\\
&= a -1 +b_1' + b_2'\left(\frac{k}{a}\right)\\
&= N_2 - 1 + b_1'.
\intertext{It then follows from the dispersion-span constraint that}
\underset{j \in[0:k-1]}{\max}~f(j)&\leq T-b_1'+1 -1+b_1' = T.
\end{align*}
Therefore, Construction \ref{cons} satisfies the decoding delay constraint. As we have used an $(a,b_1',T-b_2')$ streaming code for the $(s,r)$ channel, the code can recover from any permissible erasures of $(a_1,b_1,T-b_2')$ $\mathsf{DCSW}$ channel. Similarly, for the $(r,d)$ channel, as we have used an $(a,b_2',T-b_1')$ streaming code, the code can recover from any permissible erasures of $(a_2,b_2,T-b_1')$ $\mathsf{DCSW}$ channel. Therefore, Construction \ref{cons} is a rate-optimal $(a_1,a_2,b_1,b_2,T)$ streaming code.
\end{proof}

\section{Conclusions}
\label{sec:conclusion}
An upper bound on the capacity of three-node decode-and-forward relay network was derived, and was shown to be tight for the parameter regime: $\max\{b_1,b_2\}~|~(T-b_1-b_2-\max\{a_1,a_2\}+1), a_1=a_2 ~\text{OR} ~b_1=b_2$. A rate-optimal code construction having a field size \textit{linear} in decoding delay, $T$, was designed by using staggered diagonal embedding approach and symbol-wise $\mathsf{D\&F}$ strategy.  

\newpage

\end{document}